\documentclass[aps,prd,twocolumn,nofootinbib]{revtex4-2}
\usepackage[T1]{fontenc}
\usepackage{lmodern}
\usepackage{microtype}
\usepackage{bm}
\usepackage{graphicx}
\usepackage{hyperref}
\usepackage{amsthm}
\usepackage{xcolor}
\usepackage{amsmath}
\definecolor{linkblack}{RGB}{20,20,20}
\definecolor{citeblue}{RGB}{0,70,140}
\definecolor{urlblue}{RGB}{0,90,120}
\usepackage{float}
\hypersetup{
  colorlinks=true,
  linkcolor=linkblack,
  citecolor=citeblue,
  urlcolor=urlblue
}

\newtheorem{theorem}{Theorem}[section]
\newtheorem{corollary}[theorem]{Corollary}

\begin{document}

\title{Black Hole Interiors as a Laboratory for Time-Dependent Classical Double Copy}

\author{Damien A. Easson}
\email{easson@asu.edu}
\affiliation{
Department of Physics \& Beyond Center for Fundamental Concepts in Science,
Arizona State University, Tempe, Arizona 85287-1504, USA}

\author{Tucker Manton}
\email{tucker_manton@ucas.ac.cn}
\affiliation{School of Fundamental Physics and Mathematical Sciences, Hangzhou Institute for Advanced Study, University of Chinese Academy of Sciences (UCAS-HIAS), Hangzhou 310024, China}

\begin{abstract}
The classical double copy provides a powerful bridge between gravity and gauge
theory, but its most explicit realizations remain concentrated in stationary or
highly symmetric settings. We show that trapped regions of black-hole
geometries furnish an exact setting for time-dependent classical double copy.
In the static, spherically symmetric case, each trapped interval admits a local single-copy description on the associated Kantowski--Sachs patch that is intrinsically time dependent, although it can be derived from static Kerr--Schild data and does not require knowledge of any exterior black-hole completion. We prove that this class is characterized intrinsically by
a distinguished relation between the Kantowski--Sachs scale factors,
equivalently by the longitudinal relation \(p_{\parallel}=-\rho\),
and that the Kerr--Schild scalar and single-copy field are uniquely
reconstructible from interior cosmological data. Schwarzschild provides the
singular benchmark, for which the single-copy electric field diverges along the
interior evolution, while the regular Bardeen solution yields a finite
single-copy field throughout the trapped region and a smooth extension into a
regular static core. The Bardeen core violates the strong energy condition in a
compact region, whereas the corresponding single-copy Maxwell field remains
regular and satisfies the standard classical energy conditions. We further show
that the Bardeen horizon phase structure is encoded in the single-copy scalar.
These results identify trapped Kerr--Schild interiors as an exact local
laboratory for time-dependent classical double copy.
\end{abstract}

\maketitle

\section{Introduction}
\label{sec:intro}

The classical double copy provides a remarkable correspondence between gauge
theory and gravity, with roots in amplitude relations and later extensions to
exact classical solutions
\cite{Kawai:1985xq,Bern:2008qj,Bern:2010ue,White:2018doublecopy,Monteiro:2014cda}.
In its original Kerr--Schild formulation, the map applies most directly to
stationary black holes, plane waves, and related highly symmetric geometries,
where a gravitational metric written as
\begin{equation}\label{eq:ksmet}
 g_{\mu\nu}=\eta_{\mu\nu}+2\phi\,k_\mu k_\nu
\end{equation}
is associated with the Abelian gauge field
\begin{equation}
 A_\mu=\phi\,k_\mu.
\end{equation}
This framework has been highly successful for static black holes and their
matter-supported generalizations
\cite{Monteiro:2014cda,Luna:2015paa,Ridgway:2015fdl,Easson:2020nsbh}, while
explicit treatments of genuinely time-dependent exact spacetimes remain
comparatively scarce
\cite{Luna:2016hge,CarrilloGonzalez:2018eex,Ortaggio:2023ksk,CarrilloGonzalez:2024sdc,Ilderton:2024arb}.
Related exact-solution and source-structure developments have broadened
the classical and Weyl double-copy programs
\cite{Easson:2023iso,Easson:2021srw,Easson:2023emw}, which is a mapping between curvature quantities in the spinor formalism \cite{Luna:2018dpt}.

Most progress on time-dependent classical double copy has focused on highly
symmetric exact backgrounds or perturbative constructions. This leaves an open
question of whether there exists a controlled exact setting that is
simultaneously anisotropic, strongly time dependent, and naturally tied to
black-hole causal structure. 

Here we show trapped black-hole interiors provide such an
arena. Once the horizon is crossed, the areal radius becomes timelike and the
geometry is naturally interpreted as a cosmological spacetime. In particular, the
Schwarzschild interior is described by a Kantowski--Sachs metric: a homogeneous but
anisotropic cosmology with two distinct scale factors, one associated with the
longitudinal direction and the other with the two-sphere
\cite{Kantowski:1966,Wald:1984gr,Doran:2008}.~\footnote{More broadly, Kantowski--Sachs spacetimes are locally rotationally
symmetric and therefore lie in the Petrov-\(D\) or conformally flat branch.
This suggests that a wider class of time-dependent Kantowski--Sachs
geometries may admit a Weyl-double-copy description. For matter-supported
cases, however, one would generally require a sourced Weyl-double-copy
framework \cite{Easson:2021srw} rather than the simplest vacuum prescription.} While this
description can be derived from static Kerr--Schild data, it is not merely an
exterior reinterpretation: on the trapped patch, the single-copy field is
intrinsically time dependent and uniquely reconstructible from interior
cosmological data alone, without reference to the exterior black-hole
spacetime. The black-hole interpretation is physically natural, but the
construction itself depends only on the trapped region and its Kerr--Schild
data.

More speculative proposals have
suggested that black-hole interiors may seed new universes, especially when
the singularity is replaced by a regular de Sitter-like core
\cite{Pathria:1972,Easson:2001qf,Frolov:1989,Frolov:1990,Dymnikova:2019}.
Our perspective is complementary to this literature: we use the interior
cosmological viewpoint as a framework for extending the classical double copy
into genuinely time-dependent black-hole interiors.

Importantly, the interior reinterpretation does not conflict with Birkhoff's
theorem
\cite{Birkhoff1923,Jebsen1921,Alexandrow1923,Eiesland1921,Eiesland1925,Deser:2004gi,Schleich:2009uj,Easson:2023ytf,Easson:2026cco}.
Spherical symmetry in vacuum still fixes the local geometry to be
Schwarzschild; there are no genuinely new spherically symmetric vacuum
dynamics. What changes in the trapped region is instead the causal character
of the coordinates. The
Kantowski--Sachs form of the Schwarzschild interior therefore reflects a
region-adapted causal reinterpretation of the same local vacuum solution,
rather than a violation of Birkhoff rigidity.

Regular black holes provide an even richer setting, replacing the
Schwarzschild singularity by a nonsingular core while remaining within an exact
Kerr--Schild class on the gravity side
\cite{Bardeen:1968,AyonBeato:2000,Ansoldi:2008}. For \(M/g\gg1\), the outer
horizon lies at \(r_+\sim 2M\) while the inner horizon remains of order \(g\),
so the nonstatic interior band \(r_-<r<r_+\) can be parametrically large in
areal radius. Here \(g\) is the parameter that regulates the would-be curvature
singularity at \(r=0\); it may be interpreted as a zero-point length
\cite{Nicolini:2019irw} or, in the nonlinear-electrodynamics construction, as a
magnetic monopole charge \cite{AyonBeato:2000}.

We further show that this class admits an intrinsic characterization. Among
Kantowski--Sachs cosmologies, those arising from trapped Kerr--Schild interiors
are singled out by a distinguished relation between the scale factors,
equivalently by the longitudinal condition \(p_\parallel=-\rho\). Schwarzschild and Bardeen provide the singular and regular realizations of this framework, respectively; their detailed comparison will be developed below.

The remainder of this paper is organized as follows. In
Sec.~\ref{sec:KSvac} we review the Kantowski--Sachs form of the Schwarzschild
and Bardeen interiors. In Sec.~\ref{sec:staticdc} we summarize the underlying
static Kerr--Schild single copies. In Sec.~\ref{sec:Bardeen} we derive the
time-dependent single-copy description on trapped Kantowski--Sachs patches and
show how it is characterized and reconstructed from interior cosmological
data. In Sec.~\ref{sec:energy} we compare the gravity- and gauge-side energy
conditions. In Sec.~\ref{sec:horizons} we show how the Bardeen horizon
structure is encoded in the single-copy scalar. We conclude in
Sec.~\ref{sec:concl} with a discussion of implications and future directions.
Throughout, we use the mostly-plus metric signature and units in which
\(G=c=1\).

\section{Interior structure of Schwarzschild and Bardeen black holes}
\label{sec:KSvac}

Inside the event horizon of a Schwarzschild black hole (\(0<r<2M\)), the
coordinate roles swap: the Schwarzschild time \(t\) becomes spacelike while the
areal radius \(r\) is timelike. The interior region of both the Schwarzschild and Bardeen solution can be cast in the general form:
\begin{equation}
 ds^2=-d\tau^2+a(\tau)^2d\chi^2+b(\tau)^2d\Omega_2^2,
\label{eq:genKSform}
\end{equation}
which is known as the Kantowski-Sachs cosmological metric \cite{Kantowski:1966}.

For Schwarzschild, defining a timelike coordinate \(T\equiv r\) and
a spacelike coordinate \(\chi\equiv t\), one obtains the interior line element
\begin{equation}
 ds^2=-\frac{dT^2}{\frac{2M}{T}-1}+\Bigl(\frac{2M}{T}-1\Bigr)d\chi^2+T^2d\Omega_2^2,
\label{eq:InteriorSchwarzschild}
\end{equation}
where \(d\Omega_2^2=d\theta^2+\sin^2\theta\,d\varphi^2\). This interior metric
can be recast in the form of \eqref{eq:genKSform} by introducing a proper time \(\tau\)
defined via
\begin{equation}
 \frac{d\tau}{dT}=\sqrt{\frac{T}{2M-T}}.
\label{eq:SchProperTime}
\end{equation}
The scale factors in the longitudinal and transverse directions are 
\begin{equation}
 a(\tau)=\sqrt{\frac{2M}{T(\tau)}-1},
 \qquad
 b(\tau)=T(\tau).
\label{eq:KSfactors}
\end{equation}
A natural interior observer is one comoving with the Kantowski--Sachs slicing,
i.e.\ at fixed \((\chi,\theta,\varphi)\). Such an observer is not static in the
exterior Schwarzschild sense, but instead evolves along the proper time \(\tau\)
while experiencing a genuinely time-dependent anisotropic geometry.
For Schwarzschild the proper-time relation can be integrated exactly by
introducing a parameter \(\eta\in(0,\pi/2)\):
\begin{equation}
 T(\eta)=2M\sin^2\eta,
 \qquad
 \tau(\eta)=2M\left(\eta-\sin\eta\cos\eta\right),
\label{eq:etaParam}
\end{equation}
where we have fixed the additive constant so that \(\tau=0\) at \(T=0\).
In terms of \(\eta\), the scale factors are
\begin{equation}
 a(\eta)=\cot\eta,
 \qquad
 b(\eta)=2M\sin^2\eta.
\label{eq:etaScales}
\end{equation}

The evolution is highly anisotropic: as \(T\to0\), \(b(\tau)\to0\) while
\(a(\tau)\to\infty\). Equivalently, the anisotropy is encoded in the
directional Hubble parameters
\(
H_{\parallel}\equiv \dot a/a
\)
and
\(
H_{\perp}\equiv \dot b/b
\),
where the dot denotes \(d/d\tau\). The associated expansion and shear scalars are
\begin{equation}
\Theta = H_{\parallel}+2H_{\perp},
\qquad
\sigma^2=\frac13(H_{\parallel}-H_{\perp})^2,
\label{eq:expShear}
\end{equation}
with \(\sigma^2=\tfrac{1}{2} \sigma_{ab}\sigma^{ab}\) in standard \(3+1\) notation.
For the Schwarzschild interior,
\begin{equation}
H_{\parallel}
=-\frac{M}{T^{3/2}(2M-T)^{1/2}},
\qquad
H_{\perp}
=\frac{(2M-T)^{1/2}}{T^{3/2}},
\label{eq:HubbleSch}
\end{equation}
so \(H_{\parallel}\neq H_{\perp}\). Hence \(\sigma^2\neq0\), and in fact the
shear diverges as \(T\to0\).

This anisotropy is not merely kinematical. For the natural geodesic,
irrotational Kantowski--Sachs congruence, the shear evolution is driven by the
electric part of the Weyl tensor,
\begin{equation}
E_{ab}=C_{acbd}u^c u^d,
\label{eq:ElectricWeyl}
\end{equation}
whose orthonormal-frame eigenvalues scale as \(M/T^3\). For the same
congruence, the Raychaudhuri equation reduces in vacuum to
\begin{equation}
\dot{\Theta}
=
-\frac{1}{3}\Theta^{2}
-2\sigma^{2},
\label{eq:RaychaudhuriVac}
\end{equation}
since \(R_{ab}u^{a}u^{b}=0\). Thus there is no Ricci-driven focusing: the
additional focusing beyond the isotropic \(-\Theta^{2}/3\) term is entirely
due to the shear. Because the shear is itself driven by \(E_{ab}\), the
singular focusing may be viewed as indirectly Weyl-driven: the Schwarzschild
tidal field generates anisotropy, and that anisotropy in turn drives the
congruence to focus. In this sense, the shear provides a cosmological
encoding of the Schwarzschild tidal field.

In the limit \(T\to0\), curvature invariants diverge, signaling the usual
spacelike singularity at finite proper time.

We now apply the same perspective to the regular Bardeen black hole
\cite{Bardeen:1968,AyonBeato:2000,Ansoldi:2008}. The static Bardeen metric is
\begin{equation}
\begin{split}
 ds^2&=-f(r)dt^2+f(r)^{-1}dr^2+r^2d\Omega_2^2,\\
 f(r)&=1-\frac{2Mr^2}{(r^2+g^2)^{3/2}},
\end{split}
\label{eq:BardeenMetric}
\end{equation}
where \(M\) is the mass and \(g>0\) is the magnetic monopole charge of the nonlinear
electrodynamics source. This spacetime is regular at \(r=0\), with \(f(0)=1\),
and asymptotically flat, with \(f(\infty)=1\). Moreover,
\begin{equation}
 f'(r)=\frac{2Mr\,(r^2-2g^2)}{(r^2+g^2)^{5/2}},
\label{eq:BardeenPrime}
\end{equation}
so the nontrivial extremum occurs at \(r=\sqrt{2}\,g\), where
\begin{equation}
 f_{\min}=1-\frac{4M}{3\sqrt{3}\,g}.
\label{eq:fmin}
\end{equation}
Hence the geometry has two horizons \(r_-<r_+\) provided
\begin{equation}
 M>\frac{3\sqrt{3}}{4}\,g,
\label{eq:BardeenHorizCond}
\end{equation}
a degenerate horizon at equality, and no horizons below that threshold.~\footnote{A separate issue, not addressed here, is the stability of the inner horizon \cite{Poisson:1990eh,Ori:1991zz,Brady:1995ni}. 
As with other Cauchy horizons, blueshift and mass-inflation effects may become 
important, so our analysis concerns the exact stationary background and its 
classical double copy rather than its fully backreacted late-time fate.}

In the trapped region between the horizons, \(r_-<r<r_+\), one has \(f(r)<0\), so
\(t\) and \(r\) exchange causal character. Defining \(T\equiv r\) and \(\chi\equiv t\)
as before, the metric becomes
\begin{equation}
 ds^2=-\frac{dT^2}{|f(T)|}+|f(T)|d\chi^2+T^2d\Omega_2^2,
\label{eq:GeneralTrappedMetric}
\end{equation}
where now
\begin{equation}
 |f(T)|=\frac{2MT^2}{(T^2+g^2)^{3/2}}-1,
 \qquad r_-<T<r_+.
\label{eq:BardeenAbsF}
\end{equation}
Introducing proper time by
\begin{equation}
 \frac{d\tau}{dT}=|f(T)|^{-1/2},
\label{eq:BardeenProperTime}
\end{equation}
we obtain a Kantowski--Sachs form \eqref{eq:genKSform} on this finite nonstatic band, with scale factors
\begin{equation}
\begin{split}
 a(\tau)&=\sqrt{|f(T(\tau))|},
 \qquad
 b(\tau)=T(\tau).
\end{split}
\label{eq:BardeenKS}
\end{equation}
Thus the Bardeen solution contains a time-dependent interior region, but
only between the two horizons.

For \(0\le r<r_-\), one has \(f(r)>0\) again, so the coordinate
character flips back once more and the deep interior is a regular static core. Near \(r=0\),
\begin{equation}
 f(r)=1-\frac{2M}{g^3}r^2+\mathcal O(r^4),
\label{eq:BardeenCoreExpansion}
\end{equation}
so the metric approaches a de Sitter-like static patch with effective
cosmological constant
\begin{equation}
 \Lambda_{\rm eff}=\frac{6M}{g^3}.
\label{eq:LambdaEff}
\end{equation}
In particular, all curvature invariants remain bounded; for example, the
Kretschmann scalar approaches the finite value
\begin{equation}
 K(0)=\frac{96M^2}{g^6}.
\label{eq:BardeenK0}
\end{equation}
The resulting sign structure is therefore a static exterior (\(r>r_+\)),
a nonstatic trapped region (\(r_-<r<r_+\)), and a regular static core
(\(0\le r<r_-\)).

\section{Regular black holes and the classical double copy}
\label{sec:staticdc}

The above geometries belong to the standard Kerr--Schild class
\cite{KerrSchild:2009rep,Monteiro:2014cda,Easson:2023iso}. A static,
spherically symmetric, asymptotically flat black-hole solution may be written
on a flat Minkowski background whose line element in Kerr--Schild spherical
coordinates $(u,r,\theta,\varphi)$ is
\begin{equation}
 ds_\eta^2=-du^2+dr^2+r^2d\Omega_2^2,
\end{equation}
as
\begin{equation}
\begin{split}
 g_{\mu\nu}&=\eta_{\mu\nu}+2\phi(r)k_\mu k_\nu,\\
 k_\mu dx^\mu&=du+dr,\\
 k^\mu k_\mu&=0,
\end{split}
\label{eq:KS}
\end{equation}
where $\phi(r)$ is a scalar profile. For the Bardeen geometry,
\begin{equation}
 \phi(r)=\frac{Mr^2}{(r^2+g^2)^{3/2}},
 \qquad
 f(r)=1-2\phi(r),
\label{eq:phi_static}
\end{equation}
so that the horizon condition is simply $f(r)=0$.

Following the classical double-copy prescription
\cite{Monteiro:2014cda,Luna:2015paa,Ridgway:2015fdl,Easson:2020nsbh}, the
Kerr--Schild data $(\eta_{\mu\nu},k_\mu,\phi(r))$ map to the Abelian gauge
potential
\begin{equation}
 A=\phi(r)(du+dr),
\label{eq:SingleCopyKS}
\end{equation}
together with a zeroth-copy scalar carrying the same radial profile
$\phi(r)$. In the same Kerr--Schild chart the spacetime metric takes the form
\begin{equation}
 ds^2=-(1-2\phi)\,du^2+4\phi\,du\,dr+(1+2\phi)\,dr^2+r^2d\Omega_2^2.
\label{eq:KSMetricoffdiag}
\end{equation}
The usual diagonal static form is recovered by the radial time redefinition
\begin{equation}
 dt=du-\frac{2\phi(r)}{1-2\phi(r)}\,dr,
\label{eq:DiagTimeShift}
\end{equation}
for which
\begin{equation}
 ds^2=-(1-2\phi(r))dt^2+(1-2\phi(r))^{-1}dr^2+r^2d\Omega_2^2.
\label{eq:StaticDiagonalMetric}
\end{equation}

In the diagonal static chart the same gauge field is gauge-equivalent to a
purely electrostatic potential,
\begin{equation}
 A=\phi(r)\,dt+\frac{\phi(r)}{1-2\phi(r)}\,dr
 \;\sim\; \phi(r)\,dt,
\label{eq:SingleCopyDiag}
\end{equation}
because on any region with $f(r)=1-2\phi(r)\neq0$ the radial term is exact:
\begin{equation}
 \frac{\phi(r)}{1-2\phi(r)}\,dr=d\lambda(r),
 \qquad
 \lambda(r)=\int^r ds\,\frac{\phi(s)}{1-2\phi(s)}.
\end{equation}
The corresponding field strength may be written in either chart as
\begin{equation}
 F=-\phi'(r)\,dt\wedge dr=-\phi'(r)\,du\wedge dr.
\label{eq:Fbothcharts}
\end{equation}

In the present matter-supported examples the gauge field is sourced, and the
same is true of the corresponding zeroth-copy scalar. On the flat background,
Maxwell's equation takes the form
\begin{equation}
 \nabla^{(\eta)}_{\nu}F^{\mu\nu}=J^\mu.
\end{equation}
For a static spherically symmetric field, the only nonzero component is the
temporal one. In the Kerr--Schild chart one may write
\begin{equation}
\begin{split}
 J^u(r)&=\frac{1}{r^2}\frac{d}{dr}\left(r^2\phi'(r)\right),\\
 J^r&=J^\theta=J^\varphi=0,
\end{split}
\label{eq:Jugeneral}
\end{equation}
while in the diagonal static chart the same radial profile is denoted by
$J^t(r)=J^u(r)$. For the Bardeen case this source density remains finite at
$r=0$, mirroring the regularity of the gravitational core. This
Kerr--Schild structure is the starting point for the interior analysis that
follows.

\section{Interior observers and the classical single copy}
\label{sec:Bardeen}

Consider any static, spherically symmetric Kerr--Schild geometry whose metric
in diagonal static coordinates takes the form
\begin{equation}
 ds^2=-(1-2\phi(r))dt^2+(1-2\phi(r))^{-1}dr^2+r^2d\Omega_2^2,
\label{eq:GeneralKSMetric}
\end{equation}
and whose underlying Kerr--Schild chart $(u,r,\theta,\varphi)$ is related to
the diagonal chart by Eq.~\eqref{eq:DiagTimeShift}. In the Kerr--Schild chart
the flat-background single copy is
\begin{equation}
 A=\phi(r)(du+dr),
\label{eq:GeneralKSA}
\end{equation}
while in the diagonal static chart it is gauge-equivalent to
$A\sim\phi(r)\,dt$, with field strength given by
Eq.~\eqref{eq:Fbothcharts}.

On any trapped interval for which $f(r)\equiv1-2\phi(r)<0$, one may define
interior coordinates $T\equiv r$ and $\chi\equiv t$ as before, so that the metric is written as \eqref{eq:GeneralTrappedMetric}.
Introducing proper time by $d\tau=dT/\sqrt{|f(T)|}$ takes us to the
Kantowski--Sachs form \eqref{eq:genKSform}, with scale factors
$a(\tau)=\sqrt{|f(T(\tau))|}$ and $b(\tau)=T(\tau)$. Because the field
strength is invariant under the time redefinition \eqref{eq:DiagTimeShift},
on the trapped patch one has
\begin{equation}
 F=-\phi'(T)\,d\chi\wedge dT.
\label{eq:GeneralTrappedF}
\end{equation}
A convenient gauge-equivalent potential on this patch is therefore
\begin{equation}
 A\sim \phi(T)\,d\chi.
\label{eq:GeneralTrappedA}
\end{equation}
Thus every trapped interval of a static, spherically symmetric Kerr--Schild
geometry admits a local time-dependent single-copy description on the
associated Kantowski--Sachs patch. In the natural orthonormal
Kantowski--Sachs frame, the corresponding electric field is
$E_{\hat 1}(T)=\phi'(T)$ up to sign convention. 

\subsection{Intrinsic characterization and reconstruction of trapped Kerr--Schild patches}
\label{sec:intrinsic}

The trapped Kerr--Schild class admits an intrinsic characterization in terms of
Kantowski--Sachs data alone. We now state this characterization and the
corresponding reconstruction of the single copy.

\begin{theorem}[Intrinsic characterization]
\label{thm:KS_characterization}
Consider a Kantowski--Sachs metric
\begin{equation}
 ds^2=-d\tau^2+a(\tau)^2d\chi^2+b(\tau)^2d\Omega_2^2,
\end{equation}
on an interval where $b(\tau)$ is monotone.

Then the following are equivalent:
\begin{enumerate}
\item The metric arises locally from a trapped interval of a static,
spherically symmetric one-function metric
\begin{equation}
\begin{aligned}
ds^2&=-f(r)\,dt^2+\frac{dr^2}{f(r)}+r^2 d\Omega_2^2,\\
f(r)&=1-2\phi(r).
\end{aligned}
\end{equation}
equivalently from a spherically symmetric Kerr--Schild spacetime.

\item The scale factors satisfy
\begin{equation}
 a(\tau)=\left|\dot b(\tau)\right|.
\end{equation}
\end{enumerate}

Choosing a time orientation with $\dot b>0$, one has $a=\dot b$, and the
parent geometry is reconstructed by
\begin{equation}
\begin{aligned}
T&=b(\tau), \qquad f(T)=-a(\tau(T))^2,\\
\phi(T)&=\frac{1+a(\tau(T))^2}{2}.
\end{aligned}
\end{equation}
\end{theorem}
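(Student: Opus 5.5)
The plan is to prove both directions by a direct change of variables between the proper-time coordinate $\tau$ and the areal coordinate $T=b(\tau)$. Monotonicity of $b$ is what makes this change of variables legitimate.

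\emph{(1)$\Rightarrow$(2).} Suppose the metric comes from a trapped interval of the one-function metric. As in Sec.~\ref{sec:KSvac}, set $T=r$, $\chi=t$ and $d\tau=dT/\sqrt{|f(T)|}$. This gives $a=\sqrt{|f(T(\tau))|}$ and $b=T(\tau)$, possibly after a reflection $\tau\to-\tau$ to fix the time orientation. Then
\begin{equation}
\dot b=\frac{dT}{d\tau}=\pm\sqrt{|f(T)|}=\pm a,
\end{equation}
so $a=|\dot b|$. The only point to note is that the identification is local. Any additional constant rescaling of $\chi$, which would rescale $a$, must be excluded. The statement fixes $\chi\equiv t$ with the normalization of $t$ set by $f=1-2\phi$, and I would remark that this normalization is exactly what makes the relation hold with unit coefficient.

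\emph{(2)$\Rightarrow$(1).} Assume $a=|\dot b|$ and choose the orientation with $\dot b>0$, so $a=\dot b$. This is possible because $b$ is monotone. I also need $\dot b\neq0$ on the interval, which follows from $a>0$ for a nondegenerate metric. Then $T=b(\tau)$ is a valid coordinate with inverse $\tau(T)$, and $d\tau=dT/\dot b=dT/a$. Substituting,
\begin{equation}
ds^2=-\frac{dT^2}{a(\tau(T))^2}+a(\tau(T))^2d\chi^2+T^2d\Omega_2^2 .
\end{equation}
Define $f(T)\equiv-a(\tau(T))^2<0$. The metric then takes exactly the trapped form \eqref{eq:GeneralTrappedMetric}, with $|f|=a^2$. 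Renaming $r=T$ and $t=\chi$ yields the static one-function metric on a region where $f<0$, that is, a trapped interval. Setting $\phi=(1-f)/2=(1+a^2)/2$ gives the reconstruction formulas. The Kerr--Schild form \eqref{eq:KS} with this $\phi$ then follows from Sec.~\ref{sec:staticdc} via the inverse of the time shift \eqref{eq:DiagTimeShift}. That shift is well defined because $1-2\phi=f\neq0$ on the interval.

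\emph{Uniqueness and the main difficulty.} Uniqueness of the reconstruction follows because $T=b$ is forced by matching the sphere radius, and $f=-a^2$ is then forced by the $\chi\chi$ component. The main care needed is the converse's regularity and locality. One must ensure $\dot b$ does not vanish; otherwise $T$ fails to be a coordinate and the one-function form degenerates at a horizon. One must also track the $\chi$-normalization ambiguity, which I would settle by fixing $\chi=t$ as above. The equivalence with $p_\parallel=-\rho$ is not part of this statement and is not needed here.
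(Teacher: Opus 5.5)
Your proposal is correct and follows the same change-of-variables argument as the paper: in the forward direction $d\tau=dT/\sqrt{|f|}$ gives $a=\sqrt{|f|}=\dot b$, and in the converse $T=b(\tau)$ with $d\tau=dT/a$ yields the trapped form with $f=-a^2$ and $\phi=(1+a^2)/2$. Your extra remarks are sound refinements of points the paper leaves implicit: that $a>0$ forces $\dot b\neq0$; that the unit coefficient in $a=|\dot b|$ relies on identifying $\chi=t$ without rescaling, which the paper only acknowledges later in Corollary~\ref{cor:parallelEOS}; and that the Kerr--Schild form follows by inverting the shift \eqref{eq:DiagTimeShift}.
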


\begin{proof}
On a trapped interval $f(r)<0$, defining $T=r$ and introducing proper time
via $d\tau=dT/\sqrt{|f(T)|}$ yields a Kantowski--Sachs metric with
$a=\sqrt{|f|}$ and $b=T$, hence $\dot b=\sqrt{|f|}=a$.

Conversely, if $a=\dot b$ and $b$ is monotone, then $T=b(\tau)$ defines a
coordinate for which the metric becomes
\[
ds^2=-\frac{dT^2}{a(\tau(T))^2}+a(\tau(T))^2d\chi^2+T^2d\Omega_2^2,
\]
which is the trapped form of a static, spherically symmetric metric with
$f(T)=-a(\tau(T))^2$. Writing $f=1-2\phi$ gives the stated reconstruction.
\end{proof}

Both Schwarzschild and Bardeen satisfy the condition $a=\dot b$, and
therefore fall within the class characterized by
Theorem~\ref{thm:KS_characterization}.

\paragraph*{Einstein-equation form.}
For the Kantowski--Sachs line element \eqref{eq:genKSform}, define
\[
H_\parallel\equiv \frac{\dot a}{a},
\qquad
H_\perp\equiv \frac{\dot b}{b}.
\]
Einstein's equations then take the form
\begin{equation}\label{eq:EinstenEqs}
\begin{split}
H_\perp^2 + 2 H_\parallel H_\perp + \frac{1}{b^2}
&= 8\pi \rho, \\
2\dot H_\perp + 3H_\perp^2 + \frac{1}{b^2}
&= -8\pi p_\parallel, \\
\dot H_\parallel + \dot H_\perp + H_\parallel^2 + H_\perp^2 + H_\parallel H_\perp
&= -8\pi p_\perp .
\end{split}
\end{equation}

\begin{corollary}[Einstein-equation characterization]
\label{cor:parallelEOS}
If the Kantowski--Sachs metric satisfies Einstein's equations with
anisotropic source
\[
T^\mu{}_\nu=\mathrm{diag}(-\rho,p_\parallel,p_\perp,p_\perp),
\]
then the condition $a=\dot b$ is equivalent to
\begin{equation}
 p_\parallel=-\rho,
\end{equation}
up to a constant rescaling of $\chi$.
\end{corollary}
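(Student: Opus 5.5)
The plan is to reduce the statement to a single first-order ODE relating $a$ and $\dot b$, using only the combination $\rho+p_\parallel$ from Eqs.~\eqref{eq:EinstenEqs}. Subtracting the second Einstein equation from the first, the curvature terms $1/b^2$ cancel identically. This leaves
\begin{equation*}
8\pi(\rho+p_\parallel)=2H_\parallel H_\perp-2\dot H_\perp-2H_\perp^2 .
\end{equation*}
Next I use the elementary identity $\dot H_\perp+H_\perp^2=\ddot b/b$, which follows from $H_\perp=\dot b/b$. The combination then becomes
\begin{equation*}
8\pi(\rho+p_\parallel)=\frac{2}{b}\left(\frac{\dot a\,\dot b}{a}-\ddot b\right)
=\frac{2\dot b}{b}\,\frac{d}{d\tau}\ln\!\left|\frac{a}{\dot b}\right| ,
\end{equation*}
where the last step is valid wherever $\dot b\neq0$.

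The key structural input is the hypothesis of Theorem~\ref{thm:KS_characterization} that $b$ is monotone on the interval. I will interpret this as strict monotonicity with $\dot b\neq0$; this is automatic on a genuinely trapped patch, since there $\dot b=\sqrt{|f|}>0$. With $\dot b\neq0$ and $b>0$, the prefactor $2\dot b/b$ never vanishes. Hence $p_\parallel=-\rho$ on the interval holds if and only if $a/\dot b$ is constant, i.e.\ $a=c\,|\dot b|$ for some constant $c>0$. Here $a>0$ is assumed, as is implicit in the metric.

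The last step removes the constant. The rescaling $\chi\to\chi'=c\,\chi$ leaves the Kantowski--Sachs form invariant and sends $a\to a/c$, so that the new scale factor satisfies $a=|\dot b|$. Conversely, if $a=|\dot b|$ holds in some normalization of $\chi$, the displayed identity gives $\rho+p_\parallel=0$ directly. Since the pressures and density are unchanged by the rescaling, this establishes the equivalence up to rescaling of $\chi$. With the orientation $\dot b>0$, the condition reads $a=\dot b$, which is exactly condition~2 of Theorem~\ref{thm:KS_characterization}.

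The computation itself is routine. The main point needing care is the use of monotonicity: at points where $\dot b=0$, $\rho+p_\parallel$ vanishes automatically and carries no information about $a/\dot b$. For this reason I would state explicitly that $\dot b$ is nonvanishing on the interval. I would also note that the constant $c$ is fixed globally only because the interval is connected.
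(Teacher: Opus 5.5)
Your proof is correct and follows the paper's route. You subtract the first two Einstein equations and use $\dot H_\perp+H_\perp^2=\ddot b/b$ to get $\ddot b/b-(\dot a/a)(\dot b/b)=0$, integrate to $\dot b\propto a$, normalize by rescaling $\chi$, and obtain the converse by direct substitution. One correction to your closing remark: at a point with $\dot b=0$, your own intermediate expression gives $8\pi(\rho+p_\parallel)=-2\ddot b/b$, which need not vanish. Writing $8\pi(\rho+p_\parallel)=-\frac{2a}{b}\frac{d}{d\tau}\bigl(\dot b/a\bigr)$ avoids dividing by $\dot b$ altogether, so monotonicity is needed only to exclude the constant-$b$ branch $\dot b/a=0$.
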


\begin{proof}
Using the Einstein equations for the Kantowski--Sachs metric, the condition
\(\rho+p_\parallel=0\) implies
\[
\dot H_\perp + H_\perp^2 - H_\parallel H_\perp = 0.
\]
With
\[
H_\perp=\frac{\dot b}{b},
\qquad
H_\parallel=\frac{\dot a}{a},
\]
this becomes
\[
\frac{\ddot b}{b}-\frac{\dot a}{a}\frac{\dot b}{b}=0,
\]
hence
\[
\frac{d}{d\tau}\ln \dot b=\frac{d}{d\tau}\ln a.
\]
Therefore \(\dot b=C\,a\) for some nonzero constant \(C\), and a constant
rescaling \(\chi\mapsto C\chi\) removes this factor, yielding \(a=\dot b\).

Conversely, substituting \(a=\dot b\) into the first two Einstein equations
immediately yields \(\rho+p_\parallel=0\), hence \(p_\parallel=-\rho\).
\end{proof}

Within the class characterized by Theorem~\ref{thm:KS_characterization},
the relation \(a=\dot b\) is equivalent to \(p_\parallel=-\rho\), in agreement
with Corollary~\ref{cor:parallelEOS}. Thus the trapped Kerr--Schild interiors picked out by our classification are precisely those satisfying the longitudinal relation \(p_\parallel=-\rho\), with Schwarzschild realizing this trivially through \(\rho=p_\parallel=0\).

For Schwarzschild, using \eqref{eq:HubbleSch}, the left-hand sides vanish
identically, consistent with \(\rho=p_\parallel=p_\perp=0\). For Bardeen,
the same longitudinal condition is compatible with the regular de Sitter-like
core \eqref{eq:BardeenCoreExpansion}.

\begin{theorem}[Intrinsic reconstruction of the single copy]
\label{thm:reconstruction}
On the class characterized by Theorem~\ref{thm:KS_characterization},
the Kerr--Schild scalar and the associated Abelian single copy are
uniquely determined by the Kantowski--Sachs data.

In particular,
\begin{equation}
 \phi(T)=\frac{1+a(\tau(T))^2}{2}
       =\frac{1+\dot b(\tau(T))^2}{2},
\end{equation}
and the field strength on the trapped patch is
\begin{equation}
 F=-\phi'(T)\,d\chi\wedge dT.
\end{equation}
In the orthonormal frame one has
\begin{equation}
 E_{\hat1}(\tau)=\phi'(T(\tau))=\dot a(\tau)=\ddot b(\tau).
\end{equation}
\end{theorem}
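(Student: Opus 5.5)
The plan is to derive everything from the reconstruction already established in Theorem~\ref{thm:KS_characterization}, together with the trapped-patch single copy of Eqs.~\eqref{eq:GeneralTrappedF}--\eqref{eq:GeneralTrappedA}. The work splits into three steps: uniqueness of $\phi$, the field strength, and the orthonormal-frame electric field.

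\emph{Uniqueness of $\phi$.} Fix the time orientation $\dot b>0$, so that $a=\dot b$ holds exactly rather than up to the rescaling of $\chi$ in Corollary~\ref{cor:parallelEOS}. Then $T=b(\tau)$ is a genuine coordinate on the interval, because $b$ is monotone, and it is the unique choice that makes the sphere coefficient equal to $T^2$. This in turn forces the areal radius to be $r=T$.

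With $r=T$ fixed, comparison with the trapped form \eqref{eq:GeneralTrappedMetric} determines $f(T)=-a(\tau(T))^2$ pointwise. This is because the $d\chi^2$ and $dT^2$ coefficients are $|f|$ and $1/|f|$. The relation $f=1-2\phi$ then gives $\phi=(1+a^2)/2=(1+\dot b^2)/2$ uniquely.

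The only residual freedoms are the following, and I will note that none of them changes $\phi$:
\begin{itemize}
\item a shift $\chi\mapsto\chi+c$, which is a trivial isometry;
\item a rescaling of $\chi$, which is excluded once $a=\dot b$ is imposed;
\item the $U(1)$ gauge freedom $A\mapsto A+d\lambda$ of Eq.~\eqref{eq:SingleCopyDiag}, which acts on $A$ but not on $F$.
\end{itemize}
So "uniquely determined" is to be read as: $\phi$ and $F$ are unique, while $A$ is unique up to gauge.

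\emph{Field strength.} Inserting the reconstructed $\phi(T)$ into Eq.~\eqref{eq:GeneralTrappedF} gives $F=-\phi'(T)\,d\chi\wedge dT$ directly. For the derivative, I will use the chain rule together with $d\tau/dT=1/\dot b=1/a$:
\[
\phi'(T)=\frac{d}{dT}\frac{a^2}{2}=a\,\dot a\,\frac{d\tau}{dT}=\dot a .
\]
Since $a=\dot b$, this gives $\dot a=\ddot b$.

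\emph{Orthonormal-frame electric field.} The coframe is $e^{\hat0}=d\tau=dT/a$ and $e^{\hat1}=a\,d\chi$. Hence
\[
d\chi\wedge dT=\frac{e^{\hat1}}{a}\wedge a\,e^{\hat0}=e^{\hat1}\wedge e^{\hat0},
\]
so $F=\phi'\,e^{\hat0}\wedge e^{\hat1}$. This gives $E_{\hat1}=F_{\hat0\hat1}=\phi'(T(\tau))=\dot a=\ddot b$, with the overall sign fixed by the convention flagged after Eq.~\eqref{eq:GeneralTrappedA}.

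The main point needing care is the uniqueness claim: one must check that no alternative identification of $r$, or rescaling of $\chi$, could give a different $\phi$. I would settle this by the argument above, namely that the sphere coefficient fixes $r$ and the normalization $a=\dot b$ fixes $\chi$.
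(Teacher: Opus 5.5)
Your proposal is correct and follows essentially the same route as the paper, whose proof simply notes that everything follows from $\phi=(1+a^2)/2$ together with $dT/d\tau=a$. Your chain-rule step $\phi'(T)=a\dot a\,(d\tau/dT)=\dot a=\ddot b$ is exactly that, and your orthonormal-frame conversion and your remarks on uniqueness (areal radius fixing $T=b$, the normalization $a=\dot b$ fixing $\chi$, and $A$ determined only up to gauge) spell out details the paper leaves implicit.
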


\begin{proof}
This follows directly from $\phi=(1+a^2)/2$ together with $dT/d\tau=a$.
\end{proof}

The intrinsic reconstruction gives
\[
\phi(T)=\frac{1+a(\tau)^2}{2},
\]
and, in the orthonormal frame
\(e^{\hat 0}=d\tau\), \(e^{\hat 1}=a(\tau)\,d\chi\),
\begin{equation}\label{eq:Eabrelations}
\begin{aligned}
F&=E_{\hat 1}(\tau)\,e^{\hat 0}\wedge e^{\hat 1},\\
E_{\hat 1}(\tau)&=\phi'(T(\tau))=\dot a(\tau)=\ddot b(\tau),
\end{aligned}
\end{equation}
up to sign convention. These relations are special to the trapped interiors
inherited from the one-function static, spherically symmetric Kerr--Schild
class with \(g_{tt}g_{rr}=-1\).

\paragraph*{Asymptotic corollaries.}
If the Kerr--Schild profile has Schwarzschild-type behavior
\begin{equation}
 \phi(T)\sim \frac{M}{T}\qquad (T\to 0),
\end{equation}
then
\begin{equation}
 F\sim \frac{M}{T^2}\,d\chi\wedge dT,
 \qquad
 K\sim \frac{48M^2}{T^6},
\label{eq:SingularAsymptotics}
\end{equation}
so both the gauge-side field strength and the gravitational curvature diverge.
If instead the Kerr--Schild profile extends smoothly to a de Sitter-like core,
\begin{equation}
 \phi(r)=\frac{\Lambda_{\rm eff}}{6}r^2+\mathcal O(r^4)
 \qquad (r\to 0),
\end{equation}
then
\begin{equation}
 F=-\frac{\Lambda_{\rm eff}}{3}r\,dt\wedge dr+\mathcal O(r^3),
 \qquad
 K\to \frac{8}{3}\Lambda_{\rm eff}^2,
\label{eq:RegularAsymptotics}
\end{equation}
so the single-copy field extends smoothly to the center and vanishes there,
while the gravitational core remains curvature-regular \cite{Easson:2020nsbh}.

For Schwarzschild and Bardeen, in either chart the corresponding field strength is purely electric,
\begin{equation}
\begin{split}
 F_{tr}&=-\phi'(r),\\
F_{\mu\nu}F^{\mu\nu}&=-2[\phi'(r)]^2,\\ F_{\mu\nu}\tilde F^{\mu\nu}&=0.
\end{split}
\label{eq:staticinvariants}
\end{equation}

In the example formulas below we denote the temporal component in the
diagonal static chart by $J^t$; it has the same radial profile as the
Kerr--Schild-chart component $J^u$ in Eq.~\eqref{eq:Jugeneral}.

For Schwarzschild,
\begin{equation}
 \phi_{\rm Schw}(r)=\frac{M}{r},
 \qquad
 f(r)=1-\frac{2M}{r},
\end{equation}
so that
\begin{equation}
 F_{tr}^{\rm Schw}=\frac{M}{r^2},
 \qquad
 J^t_{\rm Schw}(r)=0\qquad(r>0),
\label{eq:SchwJ}
\end{equation}
with only the familiar distributional point source at $r=0$. 

For Bardeen,
\begin{equation}
 \phi_{\rm B}(r)=\frac{Mr^2}{(r^2+g^2)^{3/2}},
 \qquad
 f(r)=1-\frac{2Mr^2}{(r^2+g^2)^{3/2}},
\label{eq:phiBardeen}
\end{equation}
and
\begin{equation}
 \phi_{\rm B}'(r)=\frac{Mr(2g^2-r^2)}{(r^2+g^2)^{5/2}},
\label{eq:phiBprime}
\end{equation}
so that
\begin{equation}
 F_{tr}^{\rm B}(r)=-\phi_{\rm B}'(r)
 =-\frac{Mr(2g^2-r^2)}{(r^2+g^2)^{5/2}}.
\label{eq:FBardeen}
\end{equation}
The source density in the diagonal static chart is
\begin{equation}
\begin{split}
 J^t_{\rm B}(r)&=\frac{1}{r^2}\frac{d}{dr}\left(r^2\phi_{\rm B}'(r)\right)\\
 &=\frac{3Mg^2(2g^2-3r^2)}{(r^2+g^2)^{7/2}},
\end{split}
\label{eq:JBardeen}
\end{equation}
which is smooth for all $r$ and finite at the core $r=0$. 
The source current can likewise be sampled along the interior evolution.
Because \(t\) becomes the spacelike Kantowski--Sachs coordinate \(\chi\) on the
trapped patch, the static single-copy source profile \(J^t(r)\) is naturally
reinterpreted there as a longitudinal current \(J^\chi(T)=J^t(T)\). Along an
interior observer worldline \(T=T(\tau)\), one therefore obtains an inherited
proper-time-dependent current profile
\begin{equation}
J^\chi(\tau)=J^t\!\bigl(T(\tau)\bigr).
\end{equation}
For Schwarzschild this vanishes identically away from the singular endpoint,
whereas for Bardeen it remains finite throughout the trapped region,
\begin{equation}
J^\chi_{\rm B}(\tau)=
\frac{3Mg^2\bigl(2g^2-3T(\tau)^2\bigr)}
{\bigl(T(\tau)^2+g^2\bigr)^{7/2}},
\end{equation}
and changes sign at \(T^2=2g^2/3\).

Interior observers sample these same static fields along timelike
worldlines. In Schwarzschild one may write $r=T(\tau)$ throughout the trapped
region, while in Bardeen the same description applies on the finite nonstatic
band $r_-<r<r_+$. For Schwarzschild, the measured electric field is
\begin{equation}
 E_{\rm Schw}(\tau)=F_{tr}^{\rm Schw}\big|_{r=T(\tau)}
 =\frac{M}{T(\tau)^2},
\label{eq:ESchw}
\end{equation}
which diverges as the singularity $T\to0$ is approached. For Bardeen,
\begin{equation}
\begin{split}
 E_{\rm B}(\tau)&=F_{tr}^{\rm B}\big|_{r=T(\tau)}\\
 &=-\frac{M T(\tau)\bigl(2g^2-T(\tau)^2\bigr)}{\bigl(T(\tau)^2+g^2\bigr)^{5/2}},
\end{split}
\label{eq:EBardeenTau}
\end{equation}
which remains finite throughout the trapped interval $r_-<r<r_+$. The
underlying single-copy electric field
$F_{tr}^{\rm B}(r)$ extends smoothly across the inner
horizon into the regular static core and vanishes at $r=0$. After crossing
the inner horizon, infalling observers continue to sample this same regular
field along their worldlines $r(\tau)$, although the interior is then static
rather than Kantowski--Sachs; static observers in the core simply measure the
same regular electrostatic field at fixed $r$.

The electromagnetic stress tensor is
\begin{equation}
 T_{\mu\nu}^{(\mathrm{EM})}
 =F_{\mu\alpha}F_{\nu}{}^{\alpha}-\frac14\eta_{\mu\nu}F_{\alpha\beta}F^{\alpha\beta},
\label{eq:TEM}
\end{equation}
and the electric energy density is
\begin{equation}
 \rho_{\mathrm{EM}}(r)=\frac12[\phi'(r)]^2.
\label{eq:rhoEMgeneral}
\end{equation}
Hence
\begin{equation}
 \rho_{\mathrm{EM}}^{\rm Schw}(r)=\frac{M^2}{2r^4},
\label{eq:rhoEMSchw}
\end{equation}
which diverges at $r=0$, while for Bardeen
\begin{equation}
 \rho_{\mathrm{EM}}^{\rm B}(r)=\frac{M^2r^2(2g^2-r^2)^2}{2(r^2+g^2)^5},
\label{eq:rhoEMB}
\end{equation}
which is finite for all $r$ and in particular satisfies
\begin{equation}
 \rho_{\mathrm{EM}}^{\rm B}(0)=0.
\end{equation}

\begin{figure}[t]
\centering
\includegraphics[width=1\linewidth]{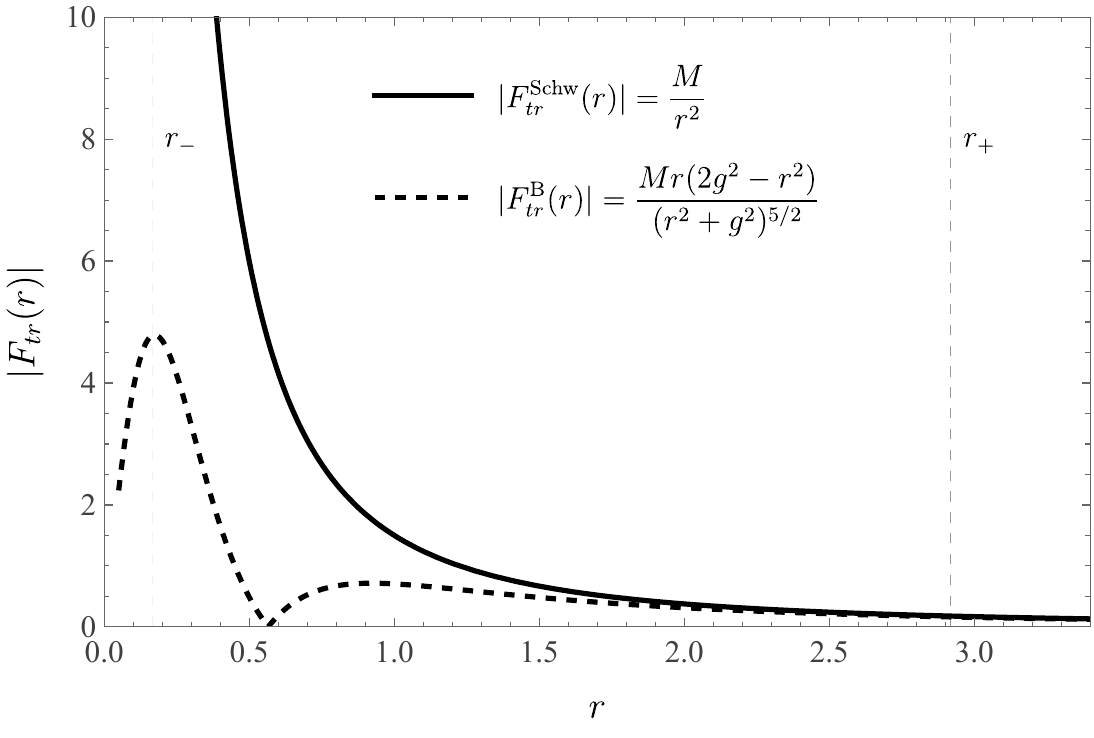}
\caption{Comparison of the electric single-copy profiles for Schwarzschild and
Bardeen, plotting $|F_{tr}(r)|=|\phi'(r)|$ from \eqref{eq:SchwJ} and \eqref{eq:FBardeen}. For Schwarzschild,
$|F_{tr}^{\rm Schw}(r)|=M/r^2$, while for Bardeen,
$|F_{tr}^{\rm B}(r)|=\left|Mr(2g^2-r^2)/(r^2+g^2)^{5/2}\right|$.
The curves are shown for $M=1.5$ in both cases and $g=0.4$ for the Bardeen
solution. For these parameters the Bardeen horizon locations are
$r_-\approx0.16$ and $r_+\approx2.9$, indicated by dashed vertical lines.
The Schwarzschild profile diverges as $r\to0$, whereas the Bardeen profile
remains finite on the trapped interval, extends smoothly across the inner
horizon, and vanishes at the regular center $r=0$.}
\label{fig:singlecopyprofiles}
\end{figure}

The classical double copy therefore maps both Schwarzschild and Bardeen to
single-copy configurations determined by inherited Kerr--Schild data on
Minkowski space. The resulting time dependence is inherited rather than
source-driven: on the underlying flat background the single-copy Maxwell
configuration is static, but on a trapped interval the areal radius becomes
timelike, so the same radial profile is re-expressed as a field depending on
interior time and hence on infaller proper time. One may therefore describe
the time dependence in two complementary ways: as the proper-time sampling of
the radial electric profile by infalling observers, or, alternatively, as an
explicitly time-dependent single-copy description of the trapped
Kantowski--Sachs region once that region is viewed together with its inherited
Kerr--Schild structure. 

\section{Energy-condition duality between gravity and gauge sectors}
\label{sec:energy}

The Bardeen spacetime is supported by an anisotropic effective matter sector
with stress tensor \cite{AyonBeato:2000,Easson:2020nsbh}
\begin{equation}
 T^\mu{}_{\nu}=\mathrm{diag}\bigl(-\rho(r),p_r(r),p_T(r),p_T(r)\bigr),
\label{eq:Tgrav}
\end{equation}
where
\begin{equation}
 \rho(r)=\frac{3Mg^2}{4\pi\bigl(r^2+g^2\bigr)^{5/2}},
 \qquad
 p_r(r)=-\rho(r),
\label{eq:rhopr}
\end{equation}
and
\begin{equation}
 p_T(r)=\frac{3Mg^2\bigl(3r^2-2g^2\bigr)}{8\pi\bigl(r^2+g^2\bigr)^{7/2}}.
\label{eq:pT}
\end{equation}
The relation \(p_r=-\rho\) shows that the source is vacuum-like in the
distinguished radial direction. Equivalently, wherever \(\rho\neq 0\), the
effective longitudinal equation-of-state parameter is
\begin{equation}
 w_{\parallel}\equiv \frac{p_r}{\rho}=-1.
\end{equation}
The source nevertheless remains anisotropic away from the center because
\(p_T\neq p_r\). At the regular center \(r=0\), however, the anisotropy
disappears and the core becomes de Sitter-like and isotropic,
\begin{equation}
 p_r(0)=p_T(0)=-\rho(0).
\end{equation}

Using the standard pointwise energy-condition definitions
\cite{Wald:1984gr,HawkingEllis:1973}, the null energy condition is controlled by
\begin{equation}
 \rho+p_r=0,
\end{equation}
and
\begin{equation}
 \rho+p_T=\frac{15Mg^2r^2}{8\pi\bigl(r^2+g^2\bigr)^{7/2}}\ge0.
\label{eq:NECgrav}
\end{equation}
Thus the null energy condition is satisfied everywhere. More precisely, the
radial null energy condition is saturated identically, while the tangential
null energy condition saturates only at \(r=0\). The strong energy condition
depends on
\begin{equation}
 \rho+p_r+2p_T=2p_T
 =\frac{3Mg^2\bigl(3r^2-2g^2\bigr)}{4\pi\bigl(r^2+g^2\bigr)^{7/2}},
\label{eq:SECgrav}
\end{equation}
which is negative for
\begin{equation}
 r^2<\frac{2}{3}g^2.
\end{equation}
Thus the Bardeen core violates the strong energy condition while preserving
the null energy condition. In this sense, the regularization is controlled by
the transverse sector: the radial null energy condition is saturated
identically, while strong-energy-condition violation occurs only when the
tangential pressure becomes sufficiently negative near the center. For
completeness, the Bardeen source satisfies the weak energy condition
everywhere, since \(\rho(r)\ge0\), \(\rho+p_r=0\), and \(\rho+p_T\ge0\). By
contrast, the dominant energy condition fails for \(r>2g\), where
\(|p_T|>\rho\); in particular, this inequality persists in the asymptotic
region. For a more in-depth discussion of energy conditions in nonsingular
black hole spacetimes see
\cite{Zaslavskii:2010qz,Maeda:2021jdc,Davies:2024ysj}.

On the gauge side, the field invariants are already given in
Eq.~\eqref{eq:staticinvariants}; the single copy is therefore a purely
electric Maxwell field. Its stress tensor \eqref{eq:TEM} obeys the usual
pointwise classical energy conditions. In particular, for any null vector
$\ell^\mu$,
\begin{equation}
 T_{\mu\nu}^{(\mathrm{EM})}\,\ell^\mu\ell^\nu\ge0.
\end{equation}
If $|\phi'(r)|$ diverges, as in the Schwarzschild case, the gauge-field energy
density also diverges. If $|\phi'(r)|$ remains finite, as in the Bardeen case,
the gauge field remains regular everywhere.

This contrast is noteworthy. The regular Bardeen solution resolves the
Schwarzschild singularity by violating the strong energy condition in a compact
core while preserving the null energy condition throughout, whereas the
corresponding single-copy Maxwell field remains finite all the way to the
center.

\section{Horizons from the single copy}
\label{sec:horizons}

It is commonly noted that classical double-copy mappings are local and appear
blind to global causal structure, such as the presence of horizons. However,
recent work by Chawla and Keeler \cite{ChawlaKeeler:2023} demonstrated that,
for Kerr--Schild spacetimes, certain local horizon diagnostics---specifically
the vanishing of the expansion of outgoing null congruences---can be expressed
purely in terms of the Kerr--Schild scalar and null vector, i.e. single-copy
data. In the present static, spherically symmetric setting this local
criterion identifies marginally trapped surfaces, whose locations coincide
with the usual Killing horizons.

For the horizon diagnostic we adopt the null-basis convention of
Ref.~\cite{ChawlaKeeler:2023},
\begin{equation}
 k_\mu dx^\mu=du-dr,
\end{equation}
which differs from the convention used above by a choice of radial null
orientation and does not affect the condition $\Phi(r)=1$ for marginally
trapped surfaces.

We now apply this method to the Bardeen black hole to show that the full
horizon phase structure---existence, merger, and disappearance of
horizons---can be recovered from gauge-side quantities alone. For any
Kerr--Schild metric \eqref{eq:ksmet},
the expansion of the outgoing null normal $l^\mu$ is given by
\begin{equation}
 \theta(l)=\theta^{(0)}(l)+\frac{\Phi}{2}\theta^{(0)}(k),
 \qquad
 \Phi\equiv2\phi.
\end{equation}
For spherically symmetric slices in flat space,
$\theta^{(0)}(k)=-2/r$ and $\theta^{(0)}(l)=1/r$, yielding
\begin{equation}
 \theta(l)=\frac{1-\Phi(r)}{r}.
\end{equation}
The condition for a marginally trapped surface is therefore
\begin{equation}
 \Phi(r)=1.
\end{equation}

In Schwarzschild, this recovers the usual horizon at $r=2M$, where
$\phi(r)=M/r$. For the Bardeen solution,
\begin{equation}
 \Phi(r)=\frac{2Mr^2}{(r^2+g^2)^{3/2}},
\end{equation}
the horizon condition becomes
\begin{equation}
 2Mr^2=(r^2+g^2)^{3/2}.
\end{equation}
Introducing dimensionless variables $x=r/g$ and $\mu=M/g$, the equation reads
\begin{equation}
 2\mu x^2=(x^2+1)^{3/2}.
\end{equation}
This equation has
\begin{itemize}
\item two real roots for $\mu>\mu_{\rm crit}=3\sqrt{3}/4$,
\item a degenerate root at $x=\sqrt{2}$ for $\mu=\mu_{\rm crit}$,
\item no real roots for $\mu<\mu_{\rm crit}$.
\end{itemize}

\begin{figure}[H]
\centering
\includegraphics[width=1.0 \linewidth]{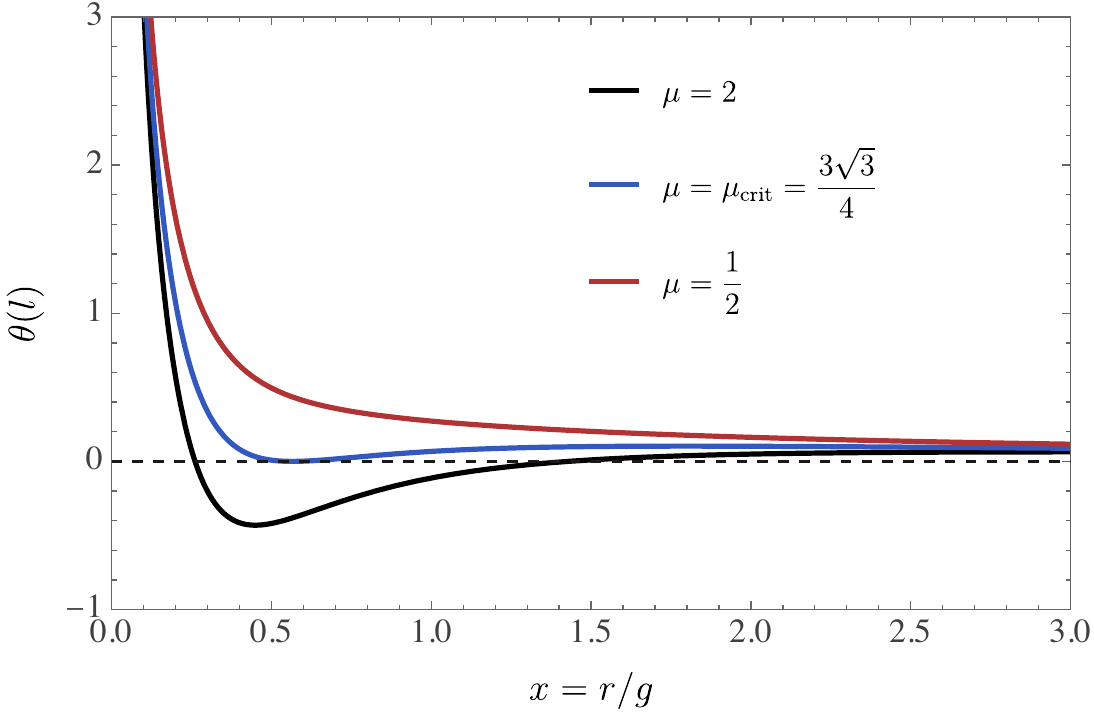}
\caption{$\theta(l)=(1-\Phi(r))/r$ for the Bardeen black hole with $g=0.4$.
Solid: $\mu=2$ (two horizons); dashed:
$\mu=\mu_{\mathrm{crit}}=3\sqrt{3}/4$ (degenerate horizon); dotted: $\mu=1/2$
(no horizons).}
\label{fig:theta}
\end{figure}

Figure~\ref{fig:theta} illustrates this behavior by plotting $\theta(l)$ for
representative values of $\mu$. We stress that this result is derived without
reference to the full gravitational metric: only the scalar profile,
through $\Phi(r)=2\phi(r)$, and the flat-space expansion data are needed. This demonstrates
that the local trapping-horizon structure is encoded in the single-copy data,
even for matter-supported regular black holes.

The zeros of $\theta(l)$ track the appearance and merger of
gravitational horizons, even though the single-copy gauge field remains smooth
and satisfies all classical energy conditions. This reinforces the idea that
the double copy can encode deep gravitational structure, including horizon
phase transitions, within the gauge theory, albeit in a subtle way. 

Unlike the Schwarzschild case, where the Kerr--Schild scalar $\phi(r)$ is
monotonic and the single-copy gauge field shows no qualitative feature at the
horizon, the regular Bardeen black hole exhibits a richer structure. The
non-monotonic behavior of $\phi(r)$ allows the single-copy condition
$\Phi(r)=1$ to admit two, one, or no real roots, corresponding precisely to
the non-extremal, extremal, and horizonless phases of the spacetime.
Remarkably, this local horizon phase structure, including the critical
extremal transition, is encoded algebraically in the gauge-side scalar alone,
despite the gauge field being smooth and respecting the energy conditions. More
cautiously, this suggests that appropriate single-copy charge profiles may
provide a useful guide to the horizon structure of matter-supported
gravitational solutions, opening a novel direction in the classical double
copy beyond purely vacuum examples. To our knowledge, the Bardeen solution
provides the first explicit regular multi-horizon Kerr--Schild example to
which the Chawla--Keeler horizon criterion has been applied, complementing the
regular-black-hole double-copy analysis of Ref.~\cite{Easson:2020nsbh}.

\section{Conclusions and Outlook}\label{sec:concl}
 
In this work we have shown that trapped intervals of static, spherically
symmetric Kerr--Schild geometries provide a controlled exact setting in which
time-dependent classical double-copy questions can be studied. Each such interval admits a local single-copy description on the associated
Kantowski--Sachs patch. We have further shown that this class admits an
intrinsic characterization: among Kantowski--Sachs cosmologies, those arising
from trapped one-function Kerr--Schild interiors are precisely those singled
out by a distinguished relation between the scale factors, equivalently by the
longitudinal condition \(p_\parallel=-\rho\). On this class, the
Kerr--Schild scalar and the associated single-copy field are uniquely
reconstructible from intrinsic cosmological data.

For Schwarzschild, the construction
recovers the familiar Coulomb single copy, while the singular
Kantowski--Sachs interior evolution is mirrored by the divergence of the
electric field and its energy density as the singularity is approached. In the
interior-cosmology description, the same Schwarzschild tidal field sources the
shear of the Kantowski--Sachs congruence and thereby strengthens the
Raychaudhuri focusing toward the spacelike singularity. For the regular
Bardeen black hole, by contrast, the trapped region is finite, the deep
interior is a regular static core rather than a second Kantowski--Sachs
phase, and the inherited single-copy field remains finite and extends smoothly
to the center.

These results sharpen three points. First, the trapped Kerr--Schild interiors
identified here are not merely examples but a distinguished intrinsic class,
characterized equally well by geometry through the scale factors or by matter
content through the longitudinal relation \(p_\parallel=-\rho\). Second, the regular Bardeen core preserves the null
energy condition while violating the strong energy condition in a compact
region, whereas the corresponding Maxwell field remains regular and satisfies
the standard pointwise classical energy conditions. Third, the non-extremal,
extremal, and horizonless phases of the Bardeen geometry are already encoded
in the single-copy scalar. Singular and regular interiors, intrinsic
reconstruction of the gauge field, gauge-side regularity, and local horizon
structure can therefore all be studied within a single framework.

More broadly, trapped Kerr--Schild interiors provide an exact setting in
which genuinely time-dependent classical double-copy questions can be studied
beyond purely stationary examples. They allow one to work with familiar
geometries whose trapped regions, singular limits, and regular cores are
already understood, while relating the interior geometry, effective matter
content, and single-copy field through explicit intrinsic relations. The trapped interior therefore provides not merely a reinterpretation of a static single copy, but a self-contained local laboratory for time-dependent classical double copy. This
framework should be useful for future studies of perturbations, dynamical
response, and more general time-dependent matter-supported Kerr--Schild
spacetimes.

\acknowledgments
It is our pleasure to thank G.~Herczeg, C.~Keeler and M.~Pezzelle for valuable discussions. DAE is supported in part by the U.S. Department of Energy,
Office of High Energy Physics, under Award Number DE-SC0019470.

\bibliography{refs}

\end{document}